\documentclass[draft, onecolumn, 11pt]{IEEEtran}
\IEEEoverridecommandlockouts

\usepackage[utf8]{inputenc} 
\usepackage[T1]{fontenc}
\usepackage{cite}
\usepackage[cmex10]{amsmath} 
\usepackage{amssymb,amsthm}
\usepackage[final]{hyperref}
\newtheorem{theorem}{Theorem}[]
\newtheorem{definition}[theorem]{Definition}

\newtheorem{lemma}[theorem]{Lemma}

\newtheorem{corollary}[theorem]{Corollary}
\newtheorem{remark}[theorem]{Remark}

\newtheorem{construction}[theorem]{Construction}

\newcommand{\F}{\mathbb{F}}

\usepackage{cite}
\usepackage{amsmath,amssymb,amsfonts}
\usepackage{algorithmic}
\usepackage{graphicx}
\usepackage{textcomp}
\usepackage{bm}
\usepackage{xcolor}
\def\BibTeX{{\rm B\kern-.05em{\sc i\kern-.025em b}\kern-.08em
    T\kern-.1667em\lower.7ex\hbox{E}\kern-.125emX}}

\begin{document}

\title{A Generalized Half-Singleton Bound for List Decoding of Insertion/Deletion Errors\\
\thanks{This work was supported by the MSCA Doctoral Networks 2021, HORIZON-MSCA-2021-DN-01 (ENCODE, grant \#101072316).} 
}

\author{\IEEEauthorblockN{{Neehar Verma, Camilla Hollanti, Razane Tajeddine}}\\
\IEEEauthorblockA{
\textit{Department of Mathematics and Systems Analysis, Aalto University,
Finland}\\
neehar.verma@aalto.fi, camilla.hollanti@aalto.fi, rstajeddine@gmail.com}
}

\maketitle

\vspace{-1.5cm}
\begin{abstract}
We derive an upper bound on the rate of a linear code capable of list decoding from insertion and deletion errors. Subsequently, we show that a linear code can list decode from at most $n - \frac{L}{L+1}k + 1$ insdel errors with fixed list size $L$, and that this bound is strict when the code does not contain the all-ones vector. We also provide a simple construction of codes achieving the optimal rate for any given list size capable of correcting a single insdel error. All the results hold for linear codes over finite fields of size $q \geq 3$.   
\end{abstract}

\section{Introduction}
Error-correcting codes are imperative in the design of reliable data storage and communication systems. Recently, we have witnessed the advent of efficient DNA synthesis and sequencing technologies which have enabled the development of efficient DNA-based archival data storage \cite{goldman2013towards}. The processes of synthesis and sequencing are inherently susceptible to insertion/deletion (\emph{insdel}) errors \cite{heckel2019characterization}. A deletion involves the removal of a symbol from an unknown position while an insertion involves the addition of a symbol at an unknown position. In contrast to typical substitutions and erasures, insdel errors destroy the alignment between transmitted and received coordinates and may even alter the length of the transmitted message. The application of coding theory to DNA-based storage has motivated renewed interest in codes capable of correcting insdel errors;    \cite{haeupler2021synchronization, milenkovic2024review, sabary2024survey} provide recent surveys on the topic. 
For a complementary information-theoretic perspective the reader may refer to \cite{shomorony2022information}. 

Classical codes for the Hamming metric have an established theory of bounds, constructions and decoding algorithms. 
The well-known \emph{Singleton bound} establishes constraints on the parameters of a code. List decoding relaxes unique decoding by allowing a bounded list of candidate codewords, enabling recovery beyond the unique decoding radius. We refer the reader to \cite{kumar2026advances} for a comprehensive survey on advances in list decoding. The relaxation on unique decoding permits parameters beyond those allowed by the Singleton bound.  The relaxed constraints are given by the \emph{generalized Singleton bound} by Shangguan and Tamo \cite{shangguan2023generalsingleton}. 

\subsection{Unique decoding of insdel errors}
When considering codes not necessarily constrained by linearity, a substantial amount of theory has been developed for unique decoding against insdel errors.
Following the seminal work of Levenshtein \cite{levenshtein1966binary} and the subsequent fundamental bounds of \cite{levenshtein2002bounds}, there has been staggering progress in the development of bounds \cite{bukh2017improved, kulkarni2013nonasymptotic, fazeli2015spherepacking, chen2021improved, kong2025combinatorial, kiah2026constantlist} to better understand the capabilities, limitations, and rate-distance tradeoff for the insdel model. To complement these bounds, there are several works that provide optimal constructions of codes for insdel errors \cite{guruswami2017deletion, sima2020optimal, cheng2018deterministic, guruswami2016efficiently, qiu2026binary}. It was shown in \cite{haeupler2021synchronization-org} that there exist explicit and efficiently decodable codes that correct a $\delta$-fraction of insdel errors while achieving rate $1-\delta-\varepsilon$, essentially attaining optimality with respect to the Singleton bound.   

For linear codes, stronger rate restrictions apply. In particular, it was shown in \cite{abdel2007linear} that a linear code capable of correcting a single deletion can have a rate at most $1/2$, far from the optimal rate of one achievable by non-linear codes. Extending these results, Cheng, Guruswami, Haeupler and Li \cite{cheng2020efficient} established the alphabet-independent \emph{half-Singleton bound} which says that any linear code that can correct a $\delta$ fraction of insdel errors has a rate at most $\frac{1-\delta}{2}+o(1)$. In other words, an $[n,k]$ linear code can correct at most $n-2k+1$ insdel errors. For linear codes not containing the all-ones vector, \cite{ji2023strict} established that the half-Singleton bound is strict. Turning our attention to structured code families, \cite{con2023reed} established that random Reed--Solomon (RS) codes attain the half-Singleton bound over large enough fields. Subsequently, it was shown in \cite{con2024randomRS} that random RS codes approach the bound over substantially smaller linear-sized fields. This result was recently improved in \cite{verma2026ag} and \cite{guan2026random} where it was shown that there exist randomized families of structured algebraic geometry  codes approaching the bound over constant-sized fields. Several other works have further explored the capabilities of RS codes against insdel errors \cite{beelen2025fulllengthratehalf, con2023optimal, liu2024optimal, zhang2026perminsdel}.

\subsection{List decoding of insdel errors}
As is the case for classical codes, list decoding for insdels relaxes the unique decoding requirement and permits parameters beyond the specified bounds for unique decoding. Recent literature has primarily worked towards investigating list decoding of insdel errors for non-linear codes. Wachter-Zeh \cite{wachter-zeh2018listdecoding} initiated the study of Johnson-type bounds for insdel codes and investigated list decoding of Varshamov--Tenengolts codes \cite{varshamov1965codes}. The original bound was amended in \cite{hayashi2020listdecodability} and subsequently strengthened in \cite{liu2023lowerbound, yang2026improvedjohnson}. The list decoding capacity of binary codes for insertions is derived in \cite{con2026listdecoding}. 
For polynomially bounded list sizes, \cite{haeupler2018synclist, guruswami2020optimally} provide explicit constructions of insdel codes. For constant-sized lists, \cite{guruswami2017deletion, hayashi2020listdecodability} provide explicit constructions. In \cite{kiah2026constantlist} upper and lower bounds are developed for constant-sized lists, with an explicit dependence on the list size.

In the case of linear codes, little is known when it comes to list decoding against insdel errors. In \cite{banerjee2025decoding} a connection is established between classical list recoverablity and list decocadabity of insdel codes. Li, Gabrys and Farnoud \cite{li2026linearlist} demonstrate that the unique decoding rate restriction of \cite{abdel2007linear} can be circumvented by permitting lists whose sizes are polynomially bounded in $n$. In particular, they derive acheivability results for binary random linear codes and subsequently provide explicit constructions of binary linear codes (by leveraging code concatenation) with reasonably sized lists and rate approaching one.

\textbf{Our contributions.} In this work we focus on constant list sizes. Our results hold universally for all linear codes over fields of size $q \geq 3$, the binary case is more subtle and requires additional restrictions on the code parameters (see Remark \ref{binarycounterexample}). We prove that, for any fixed $L$, the maximum rate achievable by a linear code capable of list decoding against a single insdel error with list size $L$ is at most $\frac{L}{L+1}$, recovering the unique decoding rate restriction of \cite{abdel2007linear} when $L=1$. We subsequently extend this result to obtain an upper bound on the number of insdel errors that an $[n,k]$ linear code of length $n$ and dimension $k$ can list decode. Specifically, we derive the \emph{generalized half-Singleton bound} which states that an $[n,k]$ linear code can correct at most $n - \frac{L}{L+1}k + 1$ insdel errors. Additionally we show that the generalized half-Singleton bound is strict when the code does not contain the all-ones vector. To complement these bounds we provide a simple construction of linear codes attaining the maximum possible rate/distance tradeoff.  

\section{Notation and preliminaries}
In the entirety of this work we use $\F_q$ to denote the finite field with $q$ elements, and we write $[n] = \{1, \dots, n\}$. We use $(\F_q)^*$ to denote the set of all words of arbitrary length over the alphabet $\F_q$.
An $[n,k,d]_q$ linear code $\mathcal{C}$ is defined as a subspace of $\F_q^n$, parameterized by its length $n$, dimension $k$, and minimum Hamming distance $d$. These three parameters of a code are related by the well-known Singleton bound, $d \leq n - k + 1$. Codes meeting this bound are called maximum distance separable (MDS) codes. We may simply refer to an $[n,k]$ code, when the field size and minimum distance are either not relevant or immediate from the context.

We recall some basic notions and relevant results concerning insdel codes. First we define the edit distance which serves as the fundamental quantity for measuring the distance between two strings. 

\begin{definition}[Edit distance]
	Given two strings $s$ and $s'$, the edit distance $d_{\mathrm{edit}}(s,s')$ is the minimum number of insertions/deletions required to convert $s$ to $s'$ (or vice-versa). 
\end{definition}

Next, we recall the consequence that linearity imposes on the rate of a code that can correct from a single insertion or deletion. This consequence was established in~\cite{abdel2007linear}.
\begin{lemma}
	An $[n,k]$ linear code $C$ with rate $\frac{k}{n} > \frac{1}{2}$ cannot correct a single insertion/deletion.
\end{lemma}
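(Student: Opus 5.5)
To show that $C$ cannot correct a single insertion/deletion, I will exhibit two distinct codewords $c \neq c'$ with a common subsequence of length $n-1$. Deleting one symbol from each then yields the same received word, so $c$ and $c'$ cannot be told apart. For single errors this also covers insertions, since two length-$n$ words share a supersequence of length $n+1$ exactly when they share a subsequence of length $n-1$. So the goal is to find $c \neq c'$ with $c_{[2,n]} = c'_{[1,n-1]}$, meaning $c_{i+1} = c'_i$ for $1 \le i \le n-1$. In that case deleting the first symbol of $c$ and the last symbol of $c'$ gives the same word.

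\textbf{Linear-algebraic reformulation.} Let $\pi_1, \pi_2 : C \to \F_q^{n-1}$ be the linear maps $\pi_1(c) = (c_2,\dots,c_n)$ and $\pi_2(c) = (c_1,\dots,c_{n-1})$. We want $c, c' \in C$ with $\pi_1(c) = \pi_2(c')$ and $c \neq c'$. The images $\pi_1(C)$ and $\pi_2(C)$ are subspaces of $\F_q^{n-1}$ of dimension at least $k-1$, since each kernel has dimension at most $1$. Hence
\[
\dim\bigl(\pi_1(C)\cap\pi_2(C)\bigr) \ge 2(k-1)-(n-1) = 2k-n-1 .
\]
Rate above $1/2$ means $2k \ge n+1$, so this lower bound is only $\ge 0$, which is too weak. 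I will instead count dimension in the space of pairs. Let
\[
V = \{(c,c') \in C\times C : \pi_1(c) = \pi_2(c')\}.
\]
It is the kernel of the map $C\times C \to \F_q^{n-1}$, $(c,c') \mapsto \pi_1(c)-\pi_2(c')$. Therefore $\dim V \ge 2k-(n-1) \ge 2$.

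\textbf{Excluding the trivial solutions.} The remaining task is to find an element of $V$ with $c \neq c'$. If $c = c'$, the condition says $c_{i+1} = c_i$ for all $i$, so $c$ is a multiple of the all-ones vector $\mathbf{1}$. Thus the diagonal part $V \cap \{c=c'\}$ has dimension at most $1$. Since $\dim V \ge 2 > 1$, some $(c,c') \in V$ has $c \neq c'$, and we are done.

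The main obstacle is making sure the dimension count gives strictly more than the diagonal. This is where the hypothesis $k/n > 1/2$ is used exactly: it gives $2k-n+1 \ge 2$. The only other point to check is the earlier reduction showing that a common length-$(n-1)$ subsequence is enough to rule out correcting both a single insertion and a single deletion.
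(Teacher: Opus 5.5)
Your proof is correct, and although the paper only cites this lemma from \cite{abdel2007linear}, your argument is the $L=1$ instance of the paper's own method: your pair space $V$ is $S^{I_0}$ with $I_0=(1,n)$, of dimension at least $2k-(n-1)$, and your diagonal is $S^{I_0}_1$, which is confined to multiples of $\bm{1}$, exactly as in the base step of Lemma~\ref{strictinclusion}. Because a single pair only requires $c\neq c'$ and not injectivity of the punctured maps, you rightly need neither the set $B$ nor $q\geq 3$, so your argument also covers binary codes; and the word $(c_1,\dots,c_n,c'_n)$ is an explicit common supersequence, which settles the insertion case directly.
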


A direct consequence of the above lemma gives us the well-established non-asymptotic version of the half-Singleton bound which was proved in \cite{cheng2020efficient}.

\begin{theorem}[Non-asymptotic half-Singleton bound]
    An $[n,k]$ linear code can correct at most $n-2k + 1$ insdel errors.
\end{theorem}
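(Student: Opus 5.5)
The plan is to reduce to the single-deletion statement of the preceding lemma by puncturing. Let $\mathcal{C}$ be an $[n,k]$ linear code and suppose it corrects $t$ insdel errors. We need to show $t \le n-2k+1$, so assume for contradiction that $t \ge n-2k+2$. Then $n \le 2k+t-2$. Deleting positions acts coordinatewise on codewords, so we would delete a fixed set of $t-1$ coordinates. Concretely, delete the last $t-1$ coordinates of every codeword, writing $\pi:\F_q^n\to\F_q^{n-t+1}$ for the projection.

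\textbf{Step 1 (injectivity).} First show $\pi$ is injective on $\mathcal{C}$. If two distinct codewords agreed after deleting the last $t-1$ coordinates, one of them could be turned into the other using $t-1$ deletions and $t-1$ insertions. That is $2(t-1)$ edits, so the pair sits within edit distance $2(t-1) < 2t$. This contradicts correctability of $t$ insdels when $t\ge 1$, because both words then reach a common string with at most $t-1$ deletions each. Hence $\pi(\mathcal{C})$ is an $[n-t+1,k]$ linear code, since $\pi$ is linear.

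\textbf{Step 2 (the punctured code corrects one deletion).} Claim: if two distinct $\pi(c),\pi(c')$ share a common subsequence of length $n-t$, then $c,c'$ share one of length $n-t$. Indeed, a subsequence of $\pi(c)$ is a subsequence of $c$. That common subsequence is obtained from each of $c$ and $c'$ by $t$ deletions. A code correcting $t$ insdels cannot have two codewords reaching a common word by $t$ deletions each, so this is a contradiction. Thus $\pi(\mathcal{C})$ corrects a single deletion. The standard equivalence between deletion and insdel correction gives that it corrects a single insdel.

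\textbf{Step 3 (rate).} The punctured code has rate $k/(n-t+1)$. From $t\ge n-2k+2$ we get $n-t+1 \le 2k-1 < 2k$, so the rate is $>1/2$. The preceding lemma then forbids correcting a single insertion/deletion, which contradicts Step 2. Hence $t\le n-2k+1$.

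The main obstacle is the bookkeeping in the edit-distance/deletion equivalence. Correcting $t$ insdels should be read as: any two distinct codewords have no common subsequence of length $n-t$, equivalently $d_{\mathrm{edit}}>2t$. Steps 1 and 2 must both be phrased in that form consistently. Otherwise everything is routine; the degenerate case $t=0$ (nothing to puncture) is trivial, since then the claim $0\le n-2k+1$ holds as $k\le n$.
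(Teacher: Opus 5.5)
Your proof is correct and is essentially the argument the paper has in mind, namely the puncturing proof of Corollary \ref{cor:nonasy} specialized to $L=1$: delete $t-1$ coordinates, check that the $[n-t+1,k]$ punctured code still corrects one insdel, and apply the rate-$\tfrac12$ lemma. The only difference is that you get injectivity of the projection directly from $t$-insdel correctability rather than by puncturing outside an information set, which shows the hypothesis $t\le n-k$ used there is automatic when $L=1$.

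One small slip: $k\le n$ does not imply $0\le n-2k+1$ (take $k=n\ge 2$), so your dismissal of $t=0$ is wrong as written. The statement is only meaningful for $t\ge 1$ (and $k\ge 1$), which is exactly the range your argument covers.
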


Lastly, we define list decoding in the context of the edit distance. Let $B(y,r) = \{x\in (\F_q)^* \mid d_\mathrm{edit}(x,y) \leq r\}$ denote the edit ball around $y$ of radius $r$. 
\begin{definition}
    Let $\delta \in [0,1]$ and $L$ be a positive integer. A linear code $\mathcal{C} \subseteq \F_q^n$ is $(\delta,L)$-list decodable if for all $y \in (\F_q)^*$, $|B(y, \delta n) \cap \mathcal{C}| \leq L$.
\end{definition}

\section{Generalized half-Singleton bound}

In this section we determine the maximum number of insdel errors that a linear code is capable of correcting with a fixed list size $L$. To arrive at this bound we first prove a general version of the rate restriction in \cite{abdel2007linear}. In particular, we show that a linear code with rate greater than $\frac{L}{L+1}$ cannot list decode a single insertion/deletion with list size $L$. To prove this we want to show that there exist $L+1$ distinct codewords that can be transformed into some common received word by a single deletion. That is, we want to find $L+1$ distinct messages $x_0, \dots, x_L \in \F_q^k$ and distinct deletion positions $i_0, \dots, i_L \in [n]$ such that $x_0G^{(i_0)} = \cdots = x_LG^{(i_L)}$ where $G^{(i)}$ denotes the generator matrix with the $i^{th}$ column deleted. 
Importantly, we need these punctured generator matrices to maintain injectivity. To do this, for any given linear code $\mathcal{C}$, we denote $B = \{i \in [n] \mid e_i \in \mathcal{C}\}$ and restrict our deletion positions to the set $[n]\setminus B$. 

Formally, we will concern ourselves with the following definition of solution spaces $S^I$ and argue that for some increasing ordered tuple $I \in [n]^{L+1}$ there exists a vector $(x_0, \dots, x_L) \in S^I$ with pairwise distinct entries under the assumption that $\frac{k}{n}>\frac{L}{L+1}$.

\begin{definition}
    Let $\mathcal{C}$ be an $[n,k]$ linear code with generator matrix $G$, and let $\ell$ be a positive integer. For an increasing ordered tuple $I = (i_0, \dots, i_\ell) \in [n]^{\ell+1}$ we define the vector space
    $$\scalebox{0.9}{$S^I=\{(x_0, \dots, x_\ell) \in \F_q^{k(\ell+1)} \mid x_0G^{(i_0)} = \cdots = x_\ell G^{(i_\ell)} \in Y\}$},$$
    where $Y$ is some subspace of $\F_q^{n-1}$.  
    We also define the following chain of subspaces $S^I = S^I_0 \supseteq S^I_1 \supseteq \cdots \supseteq S_\ell^I$ where 
    $$S^I_j = \{(x_0, \dots, x_\ell) \in S^I \mid x_0 = \cdots = x_j\}.$$
\end{definition}

Since we will be restricting our deletion positions to the set $[n]\setminus B$ we will also need to restrict our received word space to be amicable to the proof technique. We set this restricted word space to be 
$$Y = \bigcap_{i\in B} \{xG^{(i)}\} \cap \{y \mid y_{i-1} = y_{i} , i\in B\setminus\{1,n\}\} \subseteq \F_q^{n-1}.$$
The dimension of this restricted word space is bounded from below by $\dim(Y) \geq n-1 - |B|(n-k) - |B\setminus\{1,n\}|$. 
Note that the vector space $S^I$ consists of vectors in $\F_q^{k(\ell+1)}$ imposed with $\ell(n-1)$ linear constraints. This along with the additional constraints imposed by the restriction to $Y$ gives us a simple lower bound on the dimension 
$$\dim(S^I) \geq k(\ell+1) - \ell(n-1) - |B|(n-k) - |B\setminus\{1,n\}|.$$

Considering any increasing tuple $I \in [n]^{\ell + 1}$ that begins with $a = \min([n] \setminus B)$ and ends with $z = \max([n]\setminus B)$ we also have 
$$\dim(S^I_\ell) \leq 1 + |B \cap \{1,n\}|$$
since a vector in $S_\ell^I$ is identified by a single vector $x \in \F_q^k$ with the equality $xG^{(a)} = xG^{(z)}$ forcing $xG$ to be constant from $a$ through $z$. The restriction to the space $Y$ also forces $xG$ to be separately constant from $1$ through $a-1$ and also from $z+1$ through $n$, therefore the dimension is determined by at most $3$ symbols of $\F_q$ contingent on the the size of $B\cap\{1,n\}$ as desired.

Now, to arrive at the pairwise distinctness of the entries in some vector of the solution space we first prove the following lemma, which establishes that under our high rate assumption there exists a tuple $I \in [n]^{L - 2|B| +1}$ of puncturing positions such that the inclusions in the chain $S^I = S^I_0 \supseteq S^I_1 \supseteq \cdots \supseteq S_\ell^I$ are strict.  

\begin{lemma} \label{strictinclusion}
Let $\mathcal{C}$ be an $[n,k]$ linear code with generator matrix $G$. Let $L \geq 2|B| + 1$. If $1 >\frac{k}{n} > \frac{L}{L+1}$, then there exists an increasing tuple of puncturing coordinates $I \in ([n]\setminus B)^{L - 2|B| +1}$ such that
$$S^I = S^I_0 \supsetneq S^I_1 \supsetneq \cdots \supsetneq S_{L - 2|B|}^I$$
with $\dim(S^I_j) - \dim(S_{j+1}^I) = 1$ for all $1 \leq j \leq L- 2|B| -1$.
\end{lemma}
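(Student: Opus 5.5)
The plan is to set $\ell = L-2|B|$ and to combine the two dimension estimates already derived with an upper bound on each individual drop $\dim(S^I_j)-\dim(S^I_{j+1})$. If every drop is at most $1$ while the total drop $\dim(S^I_0)-\dim(S^I_\ell)$ is at least $\ell$, then every drop must equal exactly $1$. That gives all the strict inclusions and the stated dimension differences at once, and in fact slightly more than the lemma asks, since it would include $j=0$.

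\textbf{Step 1 (total drop).} I take $I=(i_0,\dots,i_\ell)$ increasing in $[n]\setminus B$ with $i_0=a=\min([n]\setminus B)$ and $i_\ell = z=\max([n]\setminus B)$. This is possible because $1>k/n$ and $B\neq[n]$; I still need to check that $|[n]\setminus B|\ge \ell+1$, which should follow from the rate hypothesis. For this $I$ the two displayed bounds give
$$\dim(S^I_0)-\dim(S^I_\ell)\;\ge\; k(\ell+1)-\ell(n-1)-|B|(n-k)-|B\setminus\{1,n\}|-1-|B\cap\{1,n\}|.$$
Writing $k(L+1)>nL$ as $n-k< k/L$, and substituting $\ell=L-2|B|$, I would verify that the right-hand side is at least $\ell$. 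The heuristic is that each element of $B$ costs about $n-k+1$ dimensions, compensated by lowering $\ell$ by $2$. The term $2|B|(n-k)$ then has to be absorbed into $(\ell+1)k-\ell n+\ell$, using $n-k\ge1$ and the strict inequality. This is routine arithmetic but needs care with the $\pm1$ terms. If it falls short by a constant, I would sharpen the bound on $\dim Y$, since coordinates in $B$ shared between constraints are counted twice.

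\textbf{Step 2 (each drop is at most $1$).} An element of $S^I_j$ is determined by the common value $x=x_0=\dots=x_j$ together with $x_{j+1},\dots,x_\ell$. Passing to $S^I_{j+1}$ imposes $x_{j+1}=x$. Since the map $(x_0,\dots,x_\ell)\mapsto x_{j+1}-x_j$ is linear on $S^I_j$ with kernel $S^I_{j+1}$, the drop equals the dimension of its image. This image is contained in
$$D_j=\{u\in\F_q^k : \text{there is } x \text{ with } xG^{(i_j)}=(x+u)G^{(i_{j+1})}\}.$$
With $c=xG$ and $c'=(x+u)G$, the condition says that $c$ and $c'$ agree outside the window $[i_j,i_{j+1}]$, and that inside the window $c'$ is $c$ shifted by one position. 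So $u$ is determined by $c'-c$, which is supported on $[i_j,i_{j+1}]$ and is constrained by this shift relation. When the window is consecutive ($i_{j+1}=i_j+1$), $c'-c$ is supported on $\{i_j,i_j+1\}$, and the shift relation $c'_{i_j}=c_{i_j+1}$ pins it to a one-parameter family modulo codewords in $B$. These are excluded because $i_j,i_j+1\notin B$ and because of the restriction to $Y$.

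\textbf{Main obstacle.} The step I expect to be hardest is Step 2 for general windows, because the endpoints $a$ and $z$ force the tuple to have gaps. I would first try to choose the intermediate $i_j$ so that all but possibly one gap are consecutive runs inside $[n]\setminus B$, and prove the drop bound there by the support argument above. The remaining gaps would be handled by the constancy constraints built into $Y$: on runs of $B$ these make the shifted word equal to the unshifted one, which again collapses $D_j$ to dimension at most $1$. If this fails, the fallback is a greedy choice of $i_{j+1}$ minimizing the drop. Step 1 then shows that the drops cannot all be forced to $0$, and a count of the remaining slack would be used to exclude drops of $2$ or more.
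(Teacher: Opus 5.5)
Your Step 1 is fine: it reduces to $k\ge L(n-k)+1\ge (L-|B|)(n-k)+|B|+1$, using $n-k\ge 1$. Step 2, however, is false, and Step 1 itself refutes it. Put $s=k-(L-|B|)(n-k)-|B|-1\ge 0$. Step 1 gives $\dim(S^I_0)-\dim(S^I_\ell)\ge \ell+s$ for every admissible $I$, and $s$ is generally positive and can be large. For example, with $B=\emptyset$, $L=1$ and $k=n-1$, you get $\dim(S^I_0)-\dim(S^I_1)\ge 2k-n=n-2$. So no tuple can have all $\ell$ drops at most $1$. Your set $D_j$ is small for consecutive windows, but for long windows it can have large dimension. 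Your remark that you would get ``slightly more than the lemma asks, since it would include $j=0$'' is the warning sign: the lemma omits $j=0$ because the first link must absorb all the excess. The repair you sketch (one long gap, all other windows consecutive) also fails. An upper bound of $1$ on the links $j\ge 1$, together with a lower bound on the total drop, says nothing about those links, since the whole total can sit at $j=0$. Consecutive windows can genuinely have drop $0$: in the paper's construction, consecutive deletion positions force $x_j=x_{j+1}$, i.e.\ $S^I_j=S^I_{j+1}$. What the lemma needs is a \emph{lower} bound of $1$ on every link $j\ge 1$, and no a priori choice of $I$ supplies that.

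The missing idea is to build $I$ adaptively by induction, as the paper does. Start from $I_0=(a,z)$, where strictness at $j=0$ is exactly your Step 1 computation. Suppose $I_r=(a,i_1,\dots,i_r,z)$ has links $1,\dots,r$ with drop exactly $1$. Telescoping gives $\dim(S^{I_r}_1)\le r+1+|B\cap\{1,n\}|$, hence $\dim(S^{I_r}_0)-\dim(S^{I_r}_1)\ge 2$ as long as $r\le \ell-2$. Now interpolate between $S^{I_r}_0$ and $S^{I_r}_1$ by
$T_\varphi=\{(x_0,\dots,x_{r+1})\in S^{I_r}_0 : x_0G \text{ is constant on } [a,\varphi]\}$ for $a\le\varphi\le i_1$. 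Here $T_{i_1}=S^{I_r}_1$ by injectivity of $G^{(i_1)}$. Each step $\varphi\to\varphi+1$ adds the single scalar equation $(x_0G)_{\varphi}=(x_0G)_{\varphi+1}$, so it lowers the dimension by at most one, and by zero when $\varphi\in B$ because of the constraints built into $Y$. Taking the left endpoint of the last strict step gives $i_{\mathrm{new}}\in(a,i_1)\setminus B$ with $\dim(T_{i_{\mathrm{new}}})=\dim(S^{I_r}_1)+1$. Inserting $i_{\mathrm{new}}$ right after $a$ yields $S^{I_{r+1}}_1\cong T_{i_{\mathrm{new}}}$ and $S^{I_{r+1}}_{j+1}\cong S^{I_r}_j$ for $j\ge 1$. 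This creates one new link of drop exactly $1$ and shifts the old ones, and the same dimension count again gives strictness at $j=0$. The ``at most one'' bound you were aiming for is true only for these single-coordinate refinements, not for the links of an arbitrary chain. It is this discrete intermediate-value step that turns ``at most one'' into ``exactly one''.
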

\begin{proof}
We will prove the statement via induction. Initially consider the tuple $I_0 = (a,z)$ where $a = \min([n] \setminus B)$ and $z = \max([n]\setminus B)$. For this tuple we have $\dim(S^{I_0}_0) \geq 2k - (n-1) - |B|(n-k) - |B\setminus\{1,n\}|$ and $\dim(S^{I_0}_1) \leq 1 + |B \cap \{1,n\}|$. Therefore we have
\begin{align*}
\dim(S^{I_0}_0) - \dim(S^{I_0}_1) & \geq 2k - n - |B|(n-k) - |B|\\
& = k - (|B|+1)(n-k) - |B| \\
& > (L - |B| - 1)(n-k) - |B| \geq 0
\end{align*}
which directly implies $S^{I_0}_0 \supsetneq S^{I_0}_1.$

Now, assume that for some $1 \leq r \leq L-2|B| - 2$ there exists an increasing tuple $I_r = (a, i_1, \dots, i_r, z) \in ([n]\setminus B)^{r + 2}$  such that $S^{I_r} = S^{I_r}_0 \supsetneq S^{I_r}_1 \supsetneq \cdots \supsetneq S_{r + 1}^{I_r}$ with $\dim(S^{I_r}_j) - \dim(S_{j+1}^{I_r}) = 1$ for all $1 \leq j \leq r$.

We can write the difference $\dim(S^{I_r}_1) - \dim(S_{r+1}^{I_r})$ as a telescopic series. Under our hypothesis each term of the series is equal to $1$,  giving us 
$$\dim(S^{I_r}_1) - \dim(S_{r+1}^{I_r}) = \sum_{j = 1}^{r} (\dim(S_j^{I_r}) - \dim(S_{j+1}^{I_r})) = r.$$ 
Therefore $\dim(S_1^{I_r}) \leq r + 1 + |B\cap\{1,n\}|$, and
\begin{align*}
    \dim(S_0^{I_r}) - \dim(S_1^{I_r}) &  \geq k - (|B| + r + 1)(n-k) - |B|  \\
    & > (L - |B| - r - 1)(n-k) - |B| \\
    & \geq 1.
\end{align*}

Now, consider the following intermediate subspaces $T_a, T_{a+1}, \dots, T_{i_1}$ where $$T_{\varphi} = \{(x_0, \dots, x_{r+1}) \in S^{I_r}_0 \mid x_0G^{(a)} = \cdots = x_0G^{(\varphi)}\}.$$ Clearly $S_0^{I_r} = T_a \supseteq T_{a+1} \supseteq \cdots \supseteq T_{i_1} = S_1^{I_r}$. Furthermore, since each step imposes at most $1$ new linear constraint, we have $\dim(T_\varphi) - \dim(T_{\varphi+1}) \leq 1$ for all $ a \leq \varphi \leq i_1 -1$. If $\varphi \in B$ then any additional linear constraints disappear due to our definition of the restricted word space $Y$. Since $\dim(S_0^{I_r}) - \dim(S_1^{I_r}) > 1$ there are at least two strict inclusions in this chain. Therefore there exists some index $i_{\mathrm{new}} \in [n]\setminus B$ such that $S_0^{I_r} \supsetneq T_{i_{\mathrm{new}}} \supsetneq S_1^{I_r}$
with $\dim(T_{i_{\mathrm{new}}}) - \dim(S_1^{I_r}) = 1$.

Next we define the tuple $I_{r+1} = (a, i_{\mathrm{new}}, i_1, \dots, i_r, z)$. The subspace $S_1^{I_{r+1}}$ is constrained by $x_0 = x_1$, deleting the duplicate entry gives us exactly $T_{i_\mathrm{new}}$.
Similarly for any $1 \leq j \leq r+1$ the subspace $S_{j+1}^{I_{r+1}}$ with the newly inserted coordinate deleted is isomorphic to $S_{j}^{I_{r}}$. Therefore,
$$\dim(S_1^{I_{r+1}}) - \dim(S_{2}^{I_{r+1}}) = \dim(T_{i_{\mathrm{new}}}) - \dim(S_{1}^{I_{r}}) = 1.$$ 
and by the induction hypothesis,
$$\dim(S_j^{I_{r+1}}) - \dim(S_{j+1}^{I_{r+1}}) = \dim(S_{j-1}^{I_{r}}) - \dim(S_{j}^{I_{r}}) = 1$$
for all $2 \leq j \leq r+1$.

It remains to show that $S_0^{I_{r+1}} \supsetneq S_1^{I_{r+1}}$. We have
$$\begin{aligned}
    \dim(S_1^{I_{r+1}}) = \dim(T_{i_\mathrm{new}}) & = \dim(S_1^{I_r}) + 1 \\
    &\leq r + 2 +  |B\cap\{1,n\}|.
\end{aligned}$$
Therefore we get
$$\scalebox{0.99}{$\begin{aligned}
    \dim(S_0^{I_{r+1}}) - \dim(S_1^{I_{r+1}}) & \geq k - (|B| + r + 2)(n-k) - |B|  \\
    & > (L - |B| - r - 2)(n-k) - |B| \\
    & \geq 0.
\end{aligned}$}$$
Hence $S_0^{I_{r+1}} \supsetneq S_1^{I_{r+1}}$. Therefore we have 
$$S^{I_{r+1}}_0 \supsetneq S^{I_{r+1}}_1 \supsetneq \cdots \supsetneq S_{r + 2}^{I_{r+1}}$$ 
with $\dim(S^{I_{r+1}}_j) - \dim(S_{j+1}^{I_{r+1}}) = 1$ for all $1 \leq j \leq r + 1$. This completes the induction.
\end{proof}

\begin{theorem} \label{ratesingleinsdel}
Let $\mathcal{C}$ be an $[n,k]$ linear code over $\F_q$ with $q \geq 3$. If $1>\frac{k}{n}>\frac{L}{L+1}$, then $\mathcal{C}$ is not $(1/n, L)$-list decodable.
\end{theorem}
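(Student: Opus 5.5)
The plan is to exhibit $L+1$ distinct codewords that all reduce, by deleting a single coordinate each, to one common word $y\in\F_q^{n-1}$. This puts $L+1$ codewords in $B(y,1)\cap\mathcal{C}$, so $\mathcal{C}$ is not $(1/n,L)$-list decodable. The workhorse is Lemma~\ref{strictinclusion}.

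First I will dispose of the coordinates in $B$, where $e_i\in\mathcal{C}$. Since $k/n<1$, not every $e_i$ lies in $\mathcal{C}$, so $[n]\setminus B\neq\emptyset$. If $L<2|B|+1$, I will argue directly. For each $i\in B$, the codeword $c$ and $c+\lambda e_i$ with $\lambda\in\F_q^*$ agree after deleting coordinate $i$, which gives a list of size $q$ from one position. To reach $L+1$, I will combine several positions of $B$ and adjacent shifts, choosing $y$ constant around the relevant coordinates. The condition $q\ge 3$ is used to make the candidates distinct, and $|B|$ positions give roughly $2|B|+1$ candidates. I expect this to be the fiddly case and would first try to reduce it to the main case by a padding or shortening argument.

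In the main case $L\ge 2|B|+1$, Lemma~\ref{strictinclusion} with $\ell=L-2|B|$ gives a tuple $I=(i_0,\dots,i_\ell)$ whose chain $S^I_0\supsetneq S^I_1\supsetneq\cdots\supsetneq S^I_\ell$ is strict. I will pick a vector $x=(x_0,\dots,x_\ell)\in S^I$ avoiding the finitely many subspaces $\{x_j=x_{j'}\}\cap S^I$ for $j<j'$. This is possible over $\F_q$ provided each such subspace is proper and $q$ is large enough relative to the number of subspaces. Properness comes from the strictness of the chain, after reordering coordinates so that each equality $x_j=x_{j'}$ defines a proper subspace. Here I would rather use the standard fact that a vector space over $\F_q$ with $q\ge 3$ cannot be covered by two proper subspaces, combined with the linear chain structure to avoid a union bound.

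Next I use the restriction to $Y$ to contribute the remaining $2|B|$ codewords. For each $i\in B$, the common word $y=x_0G^{(i_0)}$ also lies in $\{xG^{(i)}\}$, and $y_{i-1}=y_i$. Hence deleting position $i$, or its neighbour, from suitable codewords $c$ and $c+\lambda e_i$ also yields $y$. Using two values of $\lambda$ (possible since $q\ge 3$) gives two additional distinct codewords per $i\in B$, for a total of $\ell+1+2|B|=L+1$.

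Finally I will check pairwise distinctness of all $L+1$ codewords. The $x_jG$ are distinct because $G$ is injective. The perturbed codewords differ from these in coordinate $i\in B$, which I will ensure by choosing $\lambda$ to avoid the at most one bad value per comparison. The main obstacle is the distinctness step inside $S^I$: turning strict inclusions of a chain into a single vector with all entries pairwise distinct. I would attack it by induction along the chain, at each step choosing the new free coordinate to avoid one or two forbidden values, which $q\ge 3$ permits.
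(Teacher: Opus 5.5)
\textbf{Gap: distinct entries in $S^I$.} The step you flag as the main obstacle, producing a vector of $S^I$ with pairwise distinct entries, is missing its key idea. As planned, you must avoid all $\binom{\ell+1}{2}$ subspaces $\{x_j=x_{j'}\}\cap S^I$. The chain $S^I_0\supsetneq\cdots\supsetneq S^I_\ell$ only controls the adjacent conditions $x_{j-1}=x_j$. It does not even give properness of $\{x_j=x_{j'}\}$ for $|j-j'|\ge 2$. Avoidance via a union bound fails for small $q$, since a space over $\F_q$ is a union of $q+1$ proper subspaces.

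Your stepwise choice has the same problem. Write $x=w_1+t_2w_2+\cdots+t_\ell w_\ell$ with $w_j\in S^I_{j-1}\setminus S^I_j$. Then $x_{j'}-x_j$ equals a term depending on $t_2,\dots,t_{j-1}$ plus $t_j\bigl((w_j)_{j-1}-(w_j)_j\bigr)$. So \emph{each} $j'<j$ excludes one value of $t_j$, up to $j$ values in total, which $q=3$ cannot accommodate once $L$ is large. The fact you cite, that a space is not a union of two proper subspaces, holds over every field and is not enough here.

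The missing ingredient, used in the paper, is an interval property. Suppose $x_a=x_b$ with $a<b$. Then $x_aG^{(i_a)}=x_aG^{(i_b)}$ forces $x_aG$ to be constant on $[i_a,i_b]$. Hence for $a<c<b$,
$$x_cG^{(i_c)}=x_aG^{(i_a)}=x_aG^{(i_c)},$$
and injectivity of $G^{(i_c)}$ (since $i_c\notin B$) gives $x_c=x_a$. So any repetition forces an adjacent one, and the bad set is $\bigcup_{j=1}^{\ell}\{x\in S^I: x_{j-1}=x_j\}$. With this, your triangular choice excludes only one value of $t_j$ per step. The paper instead takes the minimal $j$ for which every vector of $S^I$ has a repetition among $x_0,\dots,x_j$, and translates by $S^I_{j-1}$ to force $S^I_{j-1}=S^I_j$. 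Either route works over every field, so $q\ge3$ is not what drives this step.

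\textbf{Two smaller points.} The case $L\le 2|B|$ is not fiddly. Take $y=0\in\F_q^{n-1}$. The zero codeword and the codewords $\lambda e_i$ ($i\in B$, $\lambda\neq 0$) all delete to $y$, giving $1+|B|(q-1)\ge 2|B|+1>L$.

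In the final count, ``at most one bad $\lambda$ per comparison'' is insufficient: there are many comparisons but only $q-1$ nonzero values of $\lambda$. What you need is the following. The fibre $\{c+\lambda e_i:\lambda\in\F_q\}$ of codewords deleting to $y$ at position $i$ has at most one member that also deletes to $y$ at another position. Such a member has $i$-th entry $y_{i-1}$ (other position to the left) or $y_i$ (to the right). These coincide by the constraint defining $Y$, and for $i\in\{1,n\}$ only one side occurs. Thus each $i\in B$ contributes $q-1\ge 2$ new codewords; this is where $q\ge3$ is actually used. Deleting a neighbour of $i$ plays no role.
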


\begin{proof}
If $L\leq 2|B|$, then deleting from the zero codeword and from the nonzero multiples of $e_i$, for every $i\in B$, gives $1+|B|(q-1) \geq 1+2|B| > L$ distinct codewords deleting to the zero word. 

Henceforth we assume $L\geq 2|B| + 1$. By Lemma \ref{strictinclusion}, there exists an increasing tuple of puncturing coordinates $I=(i_0,\ldots,i_{L-2|B|})\in([n]\setminus B)^{L-2|B|+1}$
such that
$$S^I=S_0^I\supsetneq S_1^I\supsetneq\cdots
\supsetneq S_{L-2|B|}^I,$$
with $\dim(S_j^I)-\dim(S_{j+1}^I)=1$ for all $1\le j\le L-2|B|-1$.

Now, suppose for the sake of contradiction that every vector in $S^I$ has $x_a=x_b$ for some $a<b$. This implies
$$
x_a=x_{a+1}=\cdots=x_b,
$$
by the fact that $x_aG$ is constant on $[i_a,i_b]$ and puncturing at each coordinate in $I$ is injective.

Let $j$ be the smallest index such that every vector in $S^I$ has a repeated vector amongst $(x_0,\ldots,x_j)$. Fix some
$x=(x_0,\ldots,x_{L-2|B|})\in S^I$ with distinct $x_0,\ldots,x_{j-1}$. Its repetition must be $x_{j-1}=x_j$.

Now, for any $z\in S_{j-1}^I$, the vector $x+z\in S^I$ has its first $j$ entries distinct, since the first $j$ distinct entries of $x$ are only translated by a constant. Therefore, by the minimality of $j$, we have
$$x_j+z_j=x_{j-1}+z_{j-1}.$$
Since $x_j=x_{j-1}$, we obtain $z_j=z_{j-1}$. Thus $z\in S_j^I$, and hence $S_{j-1}^I=S_j^I$ which
contradicts the strict inclusion $S_{j-1}^I\supsetneq S_j^I$.

We therefore must have $L-2|B|+1$ distinct codewords deleting to some common received word $y\in Y$. By the definition of $Y$ we have $q-1$ additional distinct codewords that delete to $y$ for every $i \in B$.
In total, we have at least
$$L - 2|B| + 1 + |B|(q-1) = L + 1 + |B|(q - 3) \geq L + 1$$
distinct codewords deleting to $y$.
\end{proof}

\begin{corollary}[Non-asymptotic generalized half-Singleton bound]\label{cor:nonasy}
Let $\mathcal{C}$ be an $[n,k]$ linear code over $\F_q$ with $q \geq 3$ that is $(t/n, L)$-list decodable. If $t \leq n - k$ then $$t \leq n - \frac{L+1}{L}k + 1.$$
\end{corollary}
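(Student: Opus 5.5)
We reduce to Theorem \ref{ratesingleinsdel} by deleting coordinates. Suppose $\mathcal{C}$ is $(t/n,L)$-list decodable with $t \le n-k$. We build from $\mathcal{C}$ a shorter linear code $\mathcal{C}'$ of length $n' = n-t+1$ and dimension $k$ that must be $(1/n',L)$-list decodable. Theorem \ref{ratesingleinsdel} then forces $\frac{k}{n'} \le \frac{L}{L+1}$ whenever $k<n'$. Rearranging $k(L+1) \le L(n-t+1)$ gives $t \le n - \frac{L+1}{L}k + 1$, which is the claim.

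\textbf{Construction of $\mathcal{C}'$.} Puncture $\mathcal{C}$ on its last $t-1$ coordinates, giving $\mathcal{C}' = \{c_{[1,n-t+1]} : c\in\mathcal{C}\}$. The hypothesis $t\le n-k$ gives $n' = n-t+1 \ge k+1 > k$, so the rate condition $\frac{k}{n'}<1$ in Theorem \ref{ratesingleinsdel} holds. We also need $\dim \mathcal{C}' = k$, i.e.\ the restriction map must be injective. We will arrange this by choosing $t-1$ coordinates whose deletion keeps rank $k$. Such a set exists: $G$ has $k$ linearly independent columns, and since $n-(t-1)\ge k+1 \ge k$ we can delete $t-1$ columns outside an information set. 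We then apply a coordinate permutation so that the kept coordinates come first. This step needs care because permuting coordinates changes edit distances.

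\textbf{Transfer of list decodability.} This is the main obstacle. The goal is to show that if $L+1$ distinct codewords of $\mathcal{C}'$ lie within edit distance $1$ of a common word $y'$, then $L+1$ distinct codewords of $\mathcal{C}$ lie within edit distance $t$ of a common word. The natural choice is $y = y'$. Each $c\in\mathcal{C}$ reaches $c'$ by deleting the $t-1$ removed symbols, and then $y'$ with one more insdel, for $t$ operations in total. Distinct $c'$ come from distinct $c$ by injectivity.

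This argument only works if the deleted coordinates are a contiguous block, such as the last $t-1$ positions, or more generally if $\mathcal{C}'$ is a subsequence-restriction rather than a permuted code. Deleting any set of $t-1$ positions is $t-1$ deletions regardless of where they sit, so contiguity is not actually needed. The permutation worry is therefore unfounded: take $\mathcal{C}'$ to be the code punctured on an arbitrary $(t-1)$-set $D$ disjoint from an information set, keeping the remaining coordinates in their original order.

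\textbf{Remaining checks.} Theorem \ref{ratesingleinsdel} applies to $\mathcal{C}'$ over the same field with $q\ge 3$. If $\frac{k}{n'}>\frac{L}{L+1}$, the theorem yields $L+1$ codewords of $\mathcal{C}'$ deleting to a common word, hence $L+1$ codewords of $\mathcal{C}$ within edit distance $t$ of it. This contradicts $(t/n,L)$-list decodability. Therefore $\frac{k}{n-t+1}\le\frac{L}{L+1}$, and the bound follows. The case $t=1$ is Theorem \ref{ratesingleinsdel} directly.
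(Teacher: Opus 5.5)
Your proposal is correct and follows the paper's proof essentially verbatim. You puncture $t-1$ coordinates outside an information set, keeping the original order, to get an $[n-t+1,k]$ code with $n-t+1>k$; you lift any $L+1$ codewords within one insdel of a common word to $L+1$ distinct codewords of $\mathcal{C}$ within $t$ insdels; and you invoke Theorem \ref{ratesingleinsdel}. Your detour through the last $t-1$ coordinates and permutations is unnecessary, and that choice alone would not guarantee injectivity, but your final choice of an arbitrary $(t-1)$-set disjoint from an information set is exactly the paper's.
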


\begin{proof}
Suppose the $[n,k]$ linear code $\mathcal C$ can list decode $t$ insdel errors with list size $L$, where $1\le t\le n-k$. Puncture $\mathcal C$ at any $t-1$ positions outside an information set to give $\mathcal C'$. Then $\mathcal C'$ is an $[n-t+1,k]$ linear code, with $n-t+1>k$.

If $L+1$ distinct codewords of $\mathcal C'$ were within one insdel error of the same received word, their preimages in $\mathcal C$ would be within $t$ insdel errors of that word, contradicting the assumed list size $L$ for $\mathcal{C}$. This implies that $\mathcal C'$ can list decode one insdel error with list size $L$.
By Theorem \ref{ratesingleinsdel} we have that
$\frac{k}{n-t+1} \leq \frac{L}{L+1}$,
and therefore $$t\le n-\frac{L+1}{L}k+1.$$
\end{proof}

\begin{remark} \label{binarycounterexample}
Note that Theorem \ref{ratesingleinsdel} and Corollary \ref{cor:nonasy} are valid over fields of size $q \geq 3$. The binary case is more subtle and requires additional restrictions on the code parameters. If the binary code has Hamming distance $d > 1$, then the proof of Theorem \ref{ratesingleinsdel} goes through. Consequently, if $d > t$, then the proof of Corollary \ref{cor:nonasy} goes through. Otherwise, for every $L \geq 3$, the following binary single parity check code with $d = 1$ provides a valid counter example:
$$\mathcal{C}_L = \{c \in \F_2^{2L-1} \mid c_1 + c_3 + \cdots + c_{2L-1} = 0\}.$$
The code $\mathcal{C}_L$ is $(1/(2L-1),L)$-list decodable. The rate of the code is $\frac{2L-2}{2L-1} > \frac{L}{L+1}$, going past the bound in Theorem \ref{ratesingleinsdel}.
\end{remark}

While Theorem \ref{ratesingleinsdel} and Corollary \ref{cor:nonasy} may not hold in the binary case, we have the following looser bound which holds for any linear code.
\begin{remark}\label{loosebound}
Without the restriction $t\le n-k$ and $q\geq 3$, we have the looser bound
$$t\le n-\frac{L+1}{L}(k-\lfloor\log_q L\rfloor)+1,$$
which holds for all $q \geq 2$.
If $t>n-k$, retaining $n-t+1$ linearly independent generator columns gives the full space $\mathbb F_q^{n-t+1}$. The zero word and the nonzero multiples of its unit vectors delete to the zero word of length $n-t$. Each has $q^{k-n+t-1}$ preimages in the original code, and thus
$$q^{k-n+t-1}(1+(n-t+1)(q-1)) \leq L.$$
The looser bound follows from this inequality without having to invoke Theorem \ref{ratesingleinsdel}.
\end{remark}

Asymptotically, as $n$ tends to infinity, this gives us the following bound

\begin{corollary}[Asymptotic generalized half-Singleton bound]\label{generalhalfsingleton}
Let $\mathcal{C}$ be an $[n,k]$ linear code that is $(\delta,L)$-list decodable, then it has rate at most $\frac{L}{L+1}(1 - \delta) + o(1)$. 
\end{corollary}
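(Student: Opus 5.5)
We derive the asymptotic statement from the non-asymptotic bounds by setting $t = \lfloor \delta n \rfloor$ and splitting into two cases according to whether $t \le n-k$. A $(\delta,L)$-list decodable code is also $(t/n,L)$-list decodable for this integer $t$, since the edit ball of radius $t$ is contained in the ball of radius $\delta n$. Throughout, $L$ is a fixed constant and $n \to \infty$.

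\emph{Case $t \le n-k$ with $q \ge 3$.} Corollary \ref{cor:nonasy} gives $t \le n - \frac{L+1}{L}k + 1$. Rearranging,
$$\frac{k}{n} \le \frac{L}{L+1}\left(1 - \frac{t-1}{n}\right).$$
Since $t \ge \delta n - 1$, the right-hand side is at most $\frac{L}{L+1}(1-\delta) + \frac{2L}{(L+1)n}$, which is $\frac{L}{L+1}(1-\delta) + o(1)$.

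\emph{Case $t > n-k$.} Here we cannot use Corollary \ref{cor:nonasy}, so we would use the looser bound of Remark \ref{loosebound}. That bound holds for all $q \ge 2$ and needs no restriction on $t$. It gives
$$k \le \frac{L}{L+1}(n-t+1) + \lfloor \log_q L \rfloor.$$
Dividing by $n$, the extra term $\lfloor \log_q L\rfloor / n$ is at most $\log_2 L / n = o(1)$ for fixed $L$. Together with $1/n$ and $|t-\delta n| \le 1$, this again yields rate at most $\frac{L}{L+1}(1-\delta)+o(1)$.

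\emph{The case $q=2$.} The statement of the corollary does not mention $q$. If it is meant to hold for all fields, we would use Remark \ref{loosebound} in every case, since it requires neither $q \ge 3$ nor $t \le n-k$. The additive error $\frac{L+1}{L}\lfloor\log_q L\rfloor$ is a constant and so vanishes after dividing by $n$. In fact, Remark \ref{loosebound} alone proves the corollary uniformly, and the case split is only for clarity. If instead the corollary inherits the hypothesis $q \ge 3$ from context, Corollary \ref{cor:nonasy} handles the first case directly.

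\emph{Main obstacle.} The only real care needed is in the edge cases. One is $k=n$ or small $n-t+1$, where the puncturing argument behind Remark \ref{loosebound} still has to produce at least one coordinate. The other is $\delta n$ not being an integer. Both only change the bound by $O(1)$ additively in $k$, and so are absorbed into the $o(1)$ term.
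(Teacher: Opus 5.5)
Your proposal is correct and follows the paper's own (implicit) argument: the corollary is read off from Corollary~\ref{cor:nonasy} and Remark~\ref{loosebound} with $t=\lfloor\delta n\rfloor$ and $L$ fixed, and all additive $O(1)$ terms, including $\frac{L+1}{L}\lfloor\log_q L\rfloor$, vanish after dividing by $n$. Two small cautions: first, take $\delta>0$ fixed so that $t\ge 1$ for large $n$, which Corollary~\ref{cor:nonasy} tacitly requires; second, the remark's explicit argument only treats $t>n-k$, and for $t\le n-k$ its only support is Corollary~\ref{cor:nonasy}, i.e.\ $q\ge 3$ or binary with $d>t$, so your claim that Remark~\ref{loosebound} alone settles $q=2$ uniformly inherits an unproved case from the paper, which you could close at $O(1)$ cost by passing, after puncturing, to the subcode vanishing on the at most $L-1$ unit-vector coordinates.
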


The proofs of Theorem \ref{ratesingleinsdel} and Corollary \ref{cor:nonasy} implicitly imply that the upper bound can only be attained by codes containing the all-ones vector. In the following we explicitly show that these bounds are strict when $\bm{1} \notin \mathcal{C}$.
\begin{theorem} \label{ratesingleinsdelno1}
Let $\mathcal{C}$ be an $[n,k]$ linear code over $\F_q$ with $q \geq 3$. Assume that $\bm{1} \notin \mathcal{C}$. If $1>\frac{k}{n}\geq\frac{L}{L+1}$, then $\mathcal{C}$ is not $(1/n,L)$-list decodable.
\end{theorem}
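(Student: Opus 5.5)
The plan is to rerun the machinery behind Theorem \ref{ratesingleinsdel} and recover the one unit of slack lost at equality $\frac{k}{n}=\frac{L}{L+1}$, i.e. $k=L(n-k)$, from the hypothesis $\bm{1}\notin\mathcal{C}$. The case $L\le 2|B|$ is disposed of exactly as before: the zero codeword and the nonzero multiples of each $e_i$, $i\in B$, give $1+|B|(q-1)>L$ codewords deleting to the zero word, and this uses no rate assumption. So assume $L\ge 2|B|+1$.

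The key observation concerns the upper bound $\dim(S^I_\ell)\le 1+|B\cap\{1,n\}|$. Its ``$1$'' counts the vectors $x$ with $xG$ constant on $[a,z]$. If $B=\emptyset$ then $a=1$ and $z=n$, so this forces $xG\in\langle\bm{1}\rangle$. Since $\bm{1}\notin\mathcal{C}$, it forces $x=0$, and $\dim(S^I_\ell)=0$. In general, the same argument shows that the vectors in $S^{I}_\ell$ are determined by the values of $xG$ on the at most three blocks $[1,a-1]$, $[a,z]$, $[z+1,n]$, each constant. The all-ones combination of these blocks is excluded. If the code could contain vectors constant on blocks other than all-ones, I would instead subtract one by noting that the constraint space loses the direction $\bm{1}$ in $Y$. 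Concretely, I will replace the bound by $\dim(S^I_\ell)\le |B\cap\{1,n\}|$, one less than before. I expect that, when $1\in B$ or $n\in B$, the block vectors $e_1$ or $e_n$ lie in $\mathcal{C}$, so the relevant constant-block codewords are $c\,\bm{1}$ modified on $B$-coordinates. Then $\bm{1}\notin\mathcal{C}$ kills exactly the middle block's free parameter.

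Next I redo Lemma \ref{strictinclusion} with this improved bound. In the base step,
\[
\dim(S_0^{I_0})-\dim(S_1^{I_0})\ge 2k-(n-1)-|B|(n-k)-|B\setminus\{1,n\}|-|B\cap\{1,n\}|\ge k-(|B|+1)(n-k)-|B|+1 .
\]
With $k=L(n-k)$ this is at least $(L-|B|-1)(n-k)-|B|+1\ge 1$. This uses $n-k\ge1$ and $L-|B|-1\ge |B|$, so the nonstrict rate inequality now suffices. The inductive step gains the same $+1$ everywhere: $\dim(S_0^{I_r})-\dim(S_1^{I_r})\ge (L-|B|-r-1)(n-k)-|B|+1$. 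This must still be $\ge 2$ for $r\le L-2|B|-2$, which holds since $(L-|B|-r-1)(n-k)\ge |B|+1$. The final strictness likewise needs only $\ge 1$. The chain-of-$T_\varphi$ argument producing $i_{\mathrm{new}}$ is unchanged.

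Once the lemma holds at equality, the distinctness argument of Theorem \ref{ratesingleinsdel} goes through verbatim, using the minimal index $j$ and translating by $z\in S^I_{j-1}$. It yields $L-2|B|+1$ distinct codewords deleting to a common $y\in Y$, plus $|B|(q-1)$ more, for at least $L+1$ in total since $q\ge3$. The main obstacle is the first step: rigorously justifying the improved bound on $\dim(S^I_\ell)$ when $B\cap\{1,n\}\neq\emptyset$. There the boundary blocks interact with the unit vectors in $\mathcal{C}$, and I must check that removing the $e_1,e_n$ contributions leaves a vector forced to be a multiple of $\bm{1}$ (hence zero).
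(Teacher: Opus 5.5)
Your proposal is correct and is essentially the paper's proof. You sharpen the bound to $\dim(S^I_\ell)\le|B\cap\{1,n\}|$, carry the resulting $+1$ through the inequalities of Lemma~\ref{strictinclusion} (your arithmetic at $k=L(n-k)$ is right), and reuse the distinctness argument of Theorem~\ref{ratesingleinsdel} verbatim. The step you flag as the main obstacle closes in one line, along the lines you guessed. Since $[1,a-1]\cup[z+1,n]\subseteq B$, the indicator vectors of the two whole outer blocks (not just $e_1$, $e_n$) are codewords. Subtracting them from any codeword constant on the three blocks leaves $\beta\,\bm{1}_{[a,z]}\in\mathcal{C}$, and $\beta\neq 0$ would give $\bm{1}\in\mathcal{C}$.
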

\begin{proof}
    The proof follows identically to that of Lemma \ref{strictinclusion} and Theorem \ref{ratesingleinsdel} with one modification. The extra condition that $\bm{1} \notin \mathcal{C}$ gives us a tighter bound on the dimension of the final subspace for each step of the induction, that is, we get  
    $$\dim(S^I_\ell) \leq |B \cap \{1,n\}|$$
    instead of $\dim(S^I_\ell) \leq 1 + |B \cap \{1,n\}|$. This widens the gap in dimension between the first and last subspaces in the chain by one. The proof concludes identically to that of Lemma \ref{strictinclusion}, with the additional dimension permitting the existence of an increasing tuple $I$ even when $\frac{k}{n} = \frac{L}{L+1}$. The remainder of the proof is then identical to that of Theorem \ref{ratesingleinsdel}.
\end{proof}

\begin{corollary}[Non-asymptotic generalized strict half-Singleton bound]
Let $\mathcal{C}$ be an $[n,k]$ linear code over $\F_q$ with $q \geq 3$ that is $(t/n,L)$-list decodable. Assume $\bm{1} \notin \mathcal{C}$. If $t \leq n - k$ then $$t < n - \frac{L+1}{L}k + 1.$$
\end{corollary}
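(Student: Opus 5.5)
The plan is to mirror the proof of Corollary \ref{cor:nonasy}, replacing Theorem \ref{ratesingleinsdel} by its strict version, Theorem \ref{ratesingleinsdelno1}. Suppose $\mathcal C$ is $(t/n,L)$-list decodable with $1\le t\le n-k$ and $\bm{1}\notin\mathcal C$. Fix an information set of $\mathcal C$ and puncture $\mathcal C$ at $t-1$ coordinates outside it. The result is an $[n-t+1,k]$ code $\mathcal C'$ with $n-t+1>k$. Exactly as before, any $L+1$ distinct codewords of $\mathcal C'$ that are within one insdel of a common word lift to $L+1$ distinct codewords of $\mathcal C$ within $t$ insdels of that word. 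The lift is injective because we kept an information set. Hence $\mathcal C'$ is $(1/(n-t+1),L)$-list decodable.

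The step needing new care is checking that $\bm{1}\notin\mathcal C'$, so that Theorem \ref{ratesingleinsdelno1} applies. This is where I expect the main obstacle, since puncturing can create the all-ones vector. If $\bm 1_{n-t+1}\in\mathcal C'$, its unique preimage $c\in\mathcal C$ equals $1$ on the retained coordinates but may be arbitrary on the punctured ones. The first thing I would try is to exploit freedom in choosing the punctured set.

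\begin{itemize}
\item Consider the unique $c\in\mathcal C$ whose restriction to the information set is all ones.
\item If every coordinate of $c$ equals $1$, then $c=\bm 1\in\mathcal C$, contradicting the hypothesis.
\item Otherwise some coordinate $p$ has $c_p\neq 1$. Since $c$ is $1$ on the information set, $p$ lies outside it, so we include $p$ among the $t-1$ punctured positions whenever $t\ge 2$.
\item Then any codeword of $\mathcal C$ that is $1$ on the retained coordinates is $1$ on the information set, hence equals $c$. Since $c_p\neq1$ and $p$ is punctured, this is consistent only if $c$ is $1$ on every retained coordinate outside the information set.
\end{itemize}

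This needs more care. One can instead choose the punctured set to avoid making $c$ constant: among the $n-k$ coordinates outside the information set, retain at least one $p$ with $c_p\ne1$. This is possible because we retain $n-k-t+1\ge1$ of them and may pick which. Then $c|_{\text{retained}}\ne\bm 1$, so $\bm 1\notin\mathcal C'$. When $t=1$ nothing is punctured, and $\bm 1\notin\mathcal C'=\mathcal C$ directly.

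Applying Theorem \ref{ratesingleinsdelno1} to $\mathcal C'$ requires either $k/(n-t+1)\ge 1$, which is excluded since $n-t+1>k$, or $\frac{k}{n-t+1}<\frac{L}{L+1}$. Thus $(L+1)k<L(n-t+1)$, which rearranges to $t<n-\frac{L+1}{L}k+1$, as claimed.
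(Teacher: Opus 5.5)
Your proof is correct and follows the paper's route: puncture at $t-1$ coordinates outside an information set, lift collisions back to $\mathcal C$, and apply Theorem~\ref{ratesingleinsdelno1} to the $[n-t+1,k]$ code $\mathcal C'$. It is in fact more careful than the paper, which punctures at \emph{any} such positions and never checks that $\bm{1}\notin\mathcal C'$; since puncturing can create the all-ones vector, your choice to retain a coordinate $p$ with $c_p\neq 1$ (possible because $n-k-t+1\ge 1$ coordinates outside the information set survive) is exactly what makes the theorem applicable, though in a write-up you should delete the first bulleted attempt, where puncturing $p$ is precisely the wrong choice.
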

\begin{proof}
    The proof is identical to that of Corollary \ref{cor:nonasy}, the only difference being the invocation of Theorem \ref{ratesingleinsdelno1} instead of Theorem \ref{ratesingleinsdel}.
\end{proof}

\section{Optimal construction}
In this section we provide explicit constructions of codes over any finite field $\F_q$ with $q \geq 3$ capable of list decoding from a single insdel for any given fixed list size $L$. These codes attain the bound in Theorem \ref{ratesingleinsdel} and therefore show that it is tight. 

\begin{construction}
    Let $\F_q$ be a finite field with $q\geq 3$ and let $\alpha \notin \{0,1\}$. For any integer $L \geq 2$, the $[2(L+1),2L]$ code generated by the matrix
    $$\scalebox{0.9}{$G = \begin{bmatrix}
        \alpha & 1 &  1 & 0 & 0 & 0 & 0 & \cdots& 0 \\
        0 & \alpha - 1 &  0 & 1 & 0 & 0 & 0 &\cdots & 0 \\
        0 & 0 &  \alpha - 1 & 0 & 1 & 0 & 0 & \cdots & 0 \\
        0 & \cdots & 0 & \ddots & \ddots & \ddots & 0 & \cdots & 0 \\
        0 & \cdots & 0 & 0 & \alpha - 1 & 0 & 1 & 0 & 0 \\
        0 & \cdots & 0 & 0 & 0 & \alpha - 1 & 0 & 1 & 0 \\
        0 & \cdots& 0 & 0  & 0 & 0 & \alpha - 1 & \alpha - 1 & \alpha
    \end{bmatrix}$}$$
    is $(1/(2L + 1),L)$-list decodable.
\end{construction}
\begin{proof}
    Suppose for the sake of contradiction that the code generated by the generator matrix $G$ cannot list decode from a single deletion error with list size $L$. That is, there exist $L+1$ distinct messages $x_0, \dots, x_L \in \F_q^{2L}$ and an increasing tuple of deletion coordinates $I = (i_0, \dots, i_L) \in [2(L+1)]^{L+1}$ such that 
    $$x_0 G^{(i_0)} = \cdots = x_L G^{(i_L)}.$$
    Suppose there exists some $a$ such that the adjacent deletion coordinates $i_a$ and $i_{a +1}$ are consecutive, i.e., $i_a + 1 = i_{a + 1}$. We write $x_a = (x_{a,1}, \dots, x_{a,2L})$ and $x_{a + 1} = (x_{a+1, 1}, \dots, x_{a+1, 2L})$.
    Comparing coordinatewise, the condition $x_aG^{(i_a)} = x_{a+1}G^{(i_a + 1)}$ forces $x_a = x_{a+1}$, contradicting the distinctness of the messages.

    Therefore, $I$ cannot be such that two adjacent deletion coordinates differ by one. This leaves us with the following $L + 2$ possibilities for $I$, 
    $$I_j = (1, \cdots, 2j - 1, 2j + 2, \cdots, 2L+2), ~ \text{where} ~ 0 \leq j \leq L + 1.$$
    Consider the cases in which $2 \leq j \leq L+1$. The conditions $x_j G^{(2j+2)} = \cdots = x_{L} G^{(2L+2)}$ force $$x_{j,2j} + x_{j, 2j + 2} = 2x_{j, 2j + 1},$$
    while the conditions $x_0 G^{(1)} = \cdots = x_{j-1} G^{(2j-1)}$ force $$x_{j-1,2j-3} + x_{j-1, 2j - 1} = 2x_{j-1, 2j-2}.$$ 
    Along with the condition $x_{j-1} G^{(2j-1)} = x_j G^{(2j+2)}$ this gives us $\alpha(x_{j-1,2j-1} - x_{j-1, 2j-2}) = 0$. Since $\alpha \neq 0$ we get $x_{j-1,2j-1} = x_{j-1, 2j-2}$, and by the previous equality we get $$x_{j-1, 2j-3} = x_{j-1,2j-2} = x_{j-1, 2j-1}.$$
    Now, working backwards through the conditions $x_0 G^{(1)} = \cdots = x_{j-1} G^{(2j-1)}$ we arrive at $x_{1,1} = x_{1,2} = x_{1,3}$. The condition $x_0G^{(1)} = x_1G^{(3)}$ tells us that $x_{0,1} + (\alpha - 1)x_{0,2} = \alpha x_{1,1}$ and $x_{0,s} = x_{1,s}$ for all $s\neq 1$. Along with $x_{1,1} = x_{1,2} = x_{1,3}$ we get $x_{0,1} = x_{1,1}$ and therefore $x_0 = x_1$, contradicting the distinctness of the messages. 
    The proof for the cases in which $0 \leq j \leq L-1$ follows symmetrically. Therefore, this eliminates all increasing tuples of $L + 1$ deletion coordinates $I \in [2(L+1)]^{L+1}$, showing that the given code can correct one deletion error with list size at most $L$.
    
    Now, for insertions, suppose that the code generated by the generator matrix $G$ cannot list decode from a single insertion error with list size $L$. That is, there exist $L+1$ distinct messages $x_0, \dots, x_L \in \F_q^{2L}$ and an increasing tuple of insertion coordinates $I = (i_0, \dots, i_L) \in [2(L+1)+1]^{L+1}$ such that 
    $$x_a G = y^{(i_a)} \quad \text{for all } 0\leq a \leq L,$$
    for some common received word $y \in \F_q^{2(L+1)+1}$, where $y^{(i)}$ denotes $y$ with the $i^{th}$ coordinate deleted. We know that $x_a G - x_{a+1}G = y^{(i_a)} - y^{(i_{a+1})}$ is supported on the set $\{i_a, \dots, i_{a+1} - 1\}$. If $i_{a+1} - i_a = 2$, i.e. the support consists of two consecutive coordinates, then we would have $x_aG^{(i_a + 1)} = x_{a+1}G^{(i_a)}$. By our previous deletion argument for consecutive deletion coordinates we get $x_a = x_{a+1}$. If $i_{a+1} - i_a = 1$ then we would have $x_aG^{(i_a)} = x_{a+1}G^{(i_a)}$, which again gives us the contradiction $x_a = x_{a+1}$ since $G^{(i_a)}$ is full rank. Therefore it must be that $i_{a+1} - i_a \geq 3$ for all $0 \leq a \leq L-1$. This gives us $$2L + 3 \geq i_L = \sum_{a=0}^{L-1} (i_{a+1}-i_a) + i_0 \geq 3L + 1,$$
    which is impossible for any $L>2$. When $L=2$ the only remaining possibility for the insertion coordinates is $I = (1,4,7)$. The conditions $y^{(1)}, y^{(4)}, y^{(7)} \in \mathcal{C}$ force $y$ to be constant, and therefore $x_0 = x_1 = x_2$, a contradiction. 
\end{proof}

\section{Conclusion} In this paper, we provided an upper bound for the rate of a linear
code over $\F_q \ (q\geq 3)$ capable of list decoding from insertion and deletion errors. A code construction achieving the optimal rate for any list size capable of correcting a single insdel error was also given. In future work, it would be interesting to construct codes achieving the bounds in Corollaries~\ref{cor:nonasy} and~\ref{generalhalfsingleton}.

\bibliographystyle{ieeetr}
\bibliography{main}

\section*{AI usage}
ChatGPT 5.6 was used to check for grammatical errors and improve the readability of the article. It was also used for sanity checking proofs and to provide the binary counterexample in Remark \ref{binarycounterexample}. The authors take full responsibility for all claims made in this manuscript. 
\end{document}